\newtheorem{definition}{Definition}
\newtheorem{assumption}{Assumption}
\newtheorem{lemma}{Lemma}
\newcommand{\sig}[1]{{\small\textsf{{#1}}}}
\newcolumntype{L}[1]{>{\raggedright\arraybackslash}p{#1}}
\newcolumntype{C}[1]{>{\centering\arraybackslash}p{#1}}
\newcolumntype{R}[1]{>{\raggedleft\arraybackslash}p{#1}}
\newcommand{\Comment}[1]{}
\begin{document}

\title{Towards Safety Verification of Direct Perception Neural Networks
}

\author{\IEEEauthorblockN{Chih-Hong Cheng\IEEEauthorrefmark{1}, Chung-Hao Huang\IEEEauthorrefmark{2}, Thomas Brunner\IEEEauthorrefmark{2} and Vahid Hashemi\IEEEauthorrefmark{3}}
	\IEEEauthorblockA{
		\IEEEauthorrefmark{1}DENSO AUTOMOTIVE Deutschland GmbH}
	\IEEEauthorblockA{
		\IEEEauthorrefmark{2}fortiss - Research Institute of the Free State of Bavaria 
		}
	\IEEEauthorblockA{\IEEEauthorrefmark{3}Audi AG\\
		Contact: \texttt{c.cheng@denso-auto.de}
	}
}

\maketitle

\vspace{-5mm}

\begin{abstract}

We study the problem of safety verification of direct perception neural networks, where camera images are used as inputs to produce high-level features for autonomous vehicles to make control decisions. Formal verification of direct perception neural networks is extremely challenging, as it is difficult to formulate the specification that requires characterizing input as constraints, while the number of neurons in such a network can reach millions. We approach the specification  problem by learning an input property characterizer which carefully extends a direct perception neural network at close-to-output layers, and address the scalability problem by a novel assume-guarantee based verification approach. The presented workflow is used to understand a direct perception neural network (developed by Audi) which computes the next waypoint and orientation for autonomous vehicles to follow.

\end{abstract}

\begin{IEEEkeywords}
formal verification, neural network, dependability, autonomous driving 
\end{IEEEkeywords}

\section{Introduction}

Using deep neural networks has been the \emph{de facto} choice for developing visual object detection function in automated driving. Nevertheless, in the autonomous driving workflow, the neural networks can also be used more extensively. An example is \emph{direct perception}~\cite{chen2015deepdriving}; one trains a neural network to read high-dimensional inputs (such as images from camera or point clouds from lidar) and produce low-dimensional information called \emph{affordances} (e.g., safe maneuver regions or the next waypoint to follow) which could be used to program a controller for the autonomous vehicle. One may use direct perception as a hot standby system for a classical \emph{mediated perception} system that extracts objects and identifies lane markings before affordances are produced. 

In this paper, we study the \emph{safety verification problem} for a neural network implementing direct perception, where the goal is to ensure that under certain input conditions, the undesired output values never occur. An example of such a kind can be the following:
``\sig{For every input image where the road in the image strongly bends to the right, the output of the neural network  should never suggest to strongly steer to the left}''. Overall, the safety verification problem for direct perception networks is fundamentally challenging due to two factors:

\begin{itemize}
\item {\bf (Specification)} To perform safety verification, one premise is to have the undesired property formally specified. Nevertheless, it is practically impossible to characterize input specifications from images such as ``\sig{road strongly bends to the right}" and represent them as constraints over input variables. 

\item {\bf (Scalability)} Neural networks for direct perception often take images with millions of pixels, and the internal structure of the network can have many layers. This challenges any state-of-the-art formal analysis framework in terms of scalability. 
\end{itemize}

\begin{figure*}[t]
    \centering
    \includegraphics[width=0.85\textwidth, trim=0.7cm 1.2cm 2cm 0.7cm, clip]{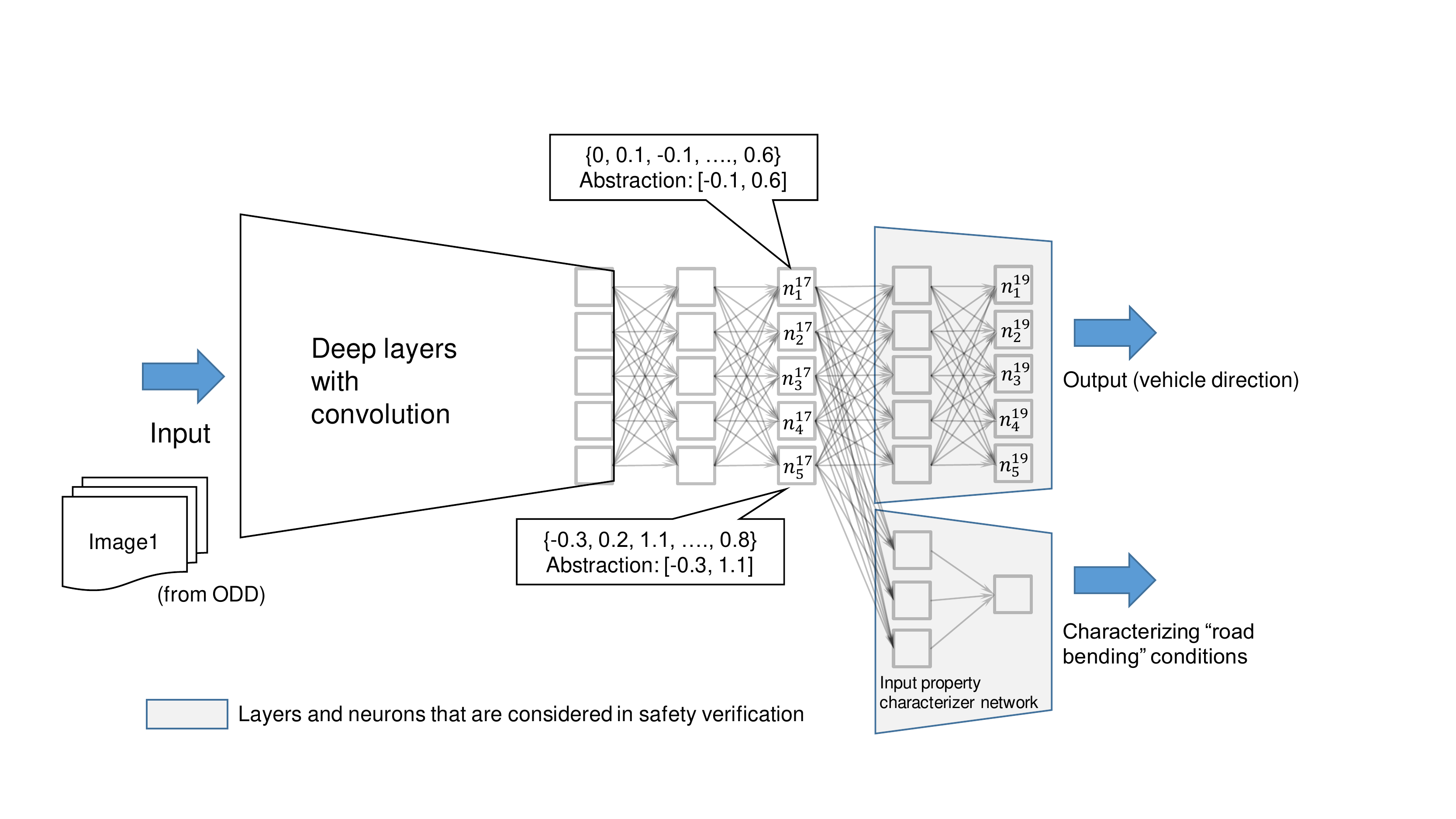}
    \caption{High-level illustration how to perform safety verification while tackling specification and scalability issues.}
    \label{fig:architecture}
\end{figure*}

Towards these issues, we present a workflow for safety verification of direct perception neural networks by simultaneously addressing  the specification and the scalability problem.  For the ease of understanding, we use Figure~\ref{fig:architecture} to explain the concept. First, we address the specification problem by learning an \emph{input property characterizer network}, where the input of the network is connected to close-to-output layer neurons of the original direct perception network. In Figure~\ref{fig:architecture}, the input property characterizer takes output values from the neurons $n^{17}_1$, $n^{17}_2$, $n^{17}_3$, $n^{17}_4$ and $n^{17}_5$ in the original deep perception network. For the previously mentioned specification, the input property characterizer outputs \sig{true} if an input image has ``\sig{road strongly bending to the right}". By doing so, the characterization of input features is aggregated to an output of a neural network. Subsequently, the safety verification problem is approached by asking if it is possible for the input-characterizing network to output \sig{true}, but the output of the direct perception network demonstrates  undesired values. As both the deep perception network and the input-characterizing network have shared neuron values, safety verification can be approached by only verifying close-to-output layers without losing  soundness. In Figure~\ref{fig:architecture}, safety verification only analyzes the sub-network colored grayed, and examines if any assignment of $n^{17}_1$, $n^{17}_2$, $n^{17}_3$, $n^{17}_4$, and $n^{17}_5$ leads to undesired output. The bounds of the neurons $n^{17}_1$, $n^{17}_2$, $n^{17}_3$, $n^{17}_4$, and $n^{17}_5$ can be decided by static analysis (which guarantees an overly conservative bound). However, using such a bound allows to have input images that are not possible to be seen in the operating design domain (ODD)\footnote{In training neural networks, the value for each pixel in an image is commonly re-scaled such that the re-scaled value is in the interval~$[0,1]$.  Starting verification using an input domain of~$[0,1]^{d_{l_0}}$ with~$d_{l_0}$ being the number of input image pixels, the result of formal verification always creates counter-examples in formal verification where counter-example images are so distant from what can be observed in practice (such as images without textures) and are rejected by experts.}. Thus we are advocating an alternative \emph{assume-guarantee} based approach where one first creates an outer polyhedron by aggregating all visited neuron values computed by the training set. We use the created polyhedron as a starting point to perform formal verification, by \emph{assuming} that for every possible input data in the ODD, the computed  neuron activation pattern is contained in this polyhedron. The assumption thus requires to be \emph{monitored in runtime} by checking if any computed neuron value falls outside the polyhedron. As an example, we consider the bound of~$n^{17}_1$ to be used in verification in Figure~\ref{fig:architecture}. By observing the minimum and the maximum of all visited values $\{0, 0.1, -0.1, \ldots, 0.6\}$,  $[-0.1, 0.6]$ is an over-approximation over all visited values, and one shall monitor in runtime whether the computed value of~$n^{17}_1$ has fallen outside~$[-0.1, 0.6]$.\footnote{Note that such a monitoring is needed regardless of formal verification, as neurons at close-to-output layers represent high-level features, so an image in operation that leads to unexpectedly high or low neuron feature intensity (indicated by falling outside the monitored interval) can be hints for incomplete data collection or indicators for the system stepping out from the ODD.}

The rest of the paper is organized as follows. Section~\ref{sec.verification.workflow} presents the required definitions as well as the workflow for verification. Section~\ref{sec.quantitative} discusses extensions to a statistical setup when the input property characterizer is not perfect. Lastly, we summarize related work in Section~\ref{sec.related} and conclude with our preliminary evaluation in Section~\ref{sec.conclusion}. 


\section{Verification Workflow}~\label{sec.verification.workflow}


A deep neural network is comprised of $L$ layers
where operationally,  the $l$-th layer for $l\in\{1,\dots,L\}$ of the network is a function $g^{(l)}: \mathbb{R}^{d_{l-1}} \rightarrow \mathbb{R}^{d_{l}}$, with $d_{l}$ being the dimension of layer~$l$.  
Given an input $\sig{in} \in \mathbb{R}^{d_{0}}$, the output of the $l$-th layer of the neural network $f^{(l)}$ is given by the functional composition of the $l$-th layer and the previous layers $f^{(l)}(\sig{in}) := \circ_{i=1}^{(l)} g^{(i)}(\sig{in})  = g^{(l)}(g^{(l-1)}\ldots g^{(2)}(g^{(1)}(\sig{in})))$.

\subsection{Characterizing Input Specification from Examples}

Let $\sig{In}_{\phi} \subseteq \mathbb{R}^{d_{0}}$ be the set of inputs of a neural network that satisfies the property~$\phi$. 
We assume that both~$\phi$ and~$\sig{In}_{\phi}$ are unknown (e.g., the road is bending left in an image), but there exists an oracle (e.g., human) that can answer for a given input~$\sig{in} \in \mathbb{R}^{d_{0}}$, whether $\sig{in} \in \sig{In}_{\phi}$. 

Let $(\sig{In}, \sig{C}_{\phi})$ be the list of training data and their associated labels (generated by the oracle) related to the input property~${\phi}$, where for every $(\sig{in}, \sig{c}) \in (\sig{In}, \sig{C}_{\phi})$, $\sig{in} \in \mathbb{R}^{d_{0}}$, $\sig{c} \in \{\sig{0}, \sig{1}\}$, we have $(\sig{in}, 1 ) \in (\sig{In}, \sig{C}_{\phi})$ \emph{if} $\sig{in} \in \sig{In}_{\phi}$ and $(\sig{in}, 0 ) \in (\sig{In}, \sig{C}_{\phi})$ \emph{if} $\sig{in} \not\in \sig{In}_{\phi}$. The \emph{perfect input property  characterizer} extending the $l$-th layer is a function~$h_{l}^{\phi}$ which guarantees that for every $(\sig{in}, \sig{c}) \in (\sig{In}, \sig{C}_{\phi})$, $h_{l}^{\phi} (f^{(l)}(\sig{in})) = \sig{c}$. The generation of~$h_{l}^{\phi}$ can be done by training a neural network as a binary classifier, with~$100\%$ success rate on the training data. The following assumption states that as long as function~$h_{l}^{\phi}$ performs perfectly on the training data, $h_{l}^{\phi}$ will also perfectly generalize to the complete input space. In other words, we can use $h_{l}^{\phi}$ to characterize~$\phi$.

\begin{assumption}[Perfect  Generalization]~\label{assumption.perfect.qualitative}
Assume that $h_{l}^{\phi}$ also perfectly characterizes $\phi$, i.e., $\forall \sig{in}\in \mathbb{R}^{d_{0}}$: $h_{l}^{\phi}(f^{(l)}(\sig{in})) = \sig{1} $ iff $ \sig{in} \in \sig{In}_{\phi}$.
\end{assumption}

\begin{definition}[Safety Verification]
The safety verification problem asks if there exists an input~$\sig{in} \in \sig{In}_{\phi}$ such that~$f^{(L)}(\sig{in})$ satisfies~$\psi$, where the \textbf{risk} condition~$\psi$ is a conjunction of linear inequalities over the output of the neural network. If no such input~$\sig{in}$ exists, we say that the neural network is safe under the input constraint~$\phi$ and the output risk constraint~$\psi$. 

\end{definition}

When Assumption~\ref{assumption.perfect.qualitative} holds, for safety verification it is \emph{equivalent} to ask whether there exists an input~$\sig{in} \in \mathbb{R}^{d_{0}}$ such that~$h_{l}^{\phi}(f^{(l)}(\sig{in})) = \sig{1}$ and~$f^{(L)}(\sig{in})$ satisfies~$\psi$. From now on, unless explicitly specified, we consider only situations where  Assumption~\ref{assumption.perfect.qualitative} holds.

\subsection{Practical Safety Verification}

\paragraph{Abstraction by omitting neurons before the $l$-th layer.} The following result states that one can retain soundness for safety verification, by considering all possible neuron values that can appear in the~$l$-th layer. 

\begin{lemma}[Verification by Layer Abstraction]~\label{lemma.abstraction.layer}
If there exists no $\hat{\sig{n}_l}\in \mathbb{R}^{d_l}$ such that 
$g^{(L)}( g^{(L-1)}\ldots(g^{(l+1)}(\hat{\sig{n}_l}))$ satisfies~$\psi_{out}$ and $h_{l}^{\phi}(\hat{\sig{n}_l}) =\sig{1}$, then the neural network is safe under input constraint~$\phi$ and output risk constraint~$\psi$.
\end{lemma}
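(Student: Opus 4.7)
The plan is to prove the lemma by contrapositive: assume the neural network is \emph{not} safe under the input constraint $\phi$ and the output risk constraint $\psi$, and construct a witness $\hat{\sig{n}_l}\in\mathbb{R}^{d_l}$ for the reduced condition on layers $l{+}1,\ldots,L$. The negation of safety, read off from the definition of the safety verification problem, gives an input $\sig{in}\in\sig{In}_{\phi}$ such that $f^{(L)}(\sig{in})$ satisfies $\psi$. The candidate witness is then simply $\hat{\sig{n}_l}:=f^{(l)}(\sig{in})$, i.e., the activation vector induced at layer $l$ by this unsafe input.

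Next I would verify the two required conditions for this $\hat{\sig{n}_l}$. For the output side, I would unfold the definition $f^{(L)}(\sig{in})=g^{(L)}(g^{(L-1)}(\ldots g^{(l+1)}(f^{(l)}(\sig{in}))\ldots))$, which is immediate from the functional-composition definition of $f^{(L)}$ given at the start of Section~\ref{sec.verification.workflow}; substituting $\hat{\sig{n}_l}$ shows that $g^{(L)}(g^{(L-1)}(\ldots g^{(l+1)}(\hat{\sig{n}_l})\ldots))$ satisfies $\psi$. For the characterizer side, I would invoke Assumption~\ref{assumption.perfect.qualitative}: since $\sig{in}\in\sig{In}_{\phi}$, perfect generalization yields $h_{l}^{\phi}(f^{(l)}(\sig{in}))=\sig{1}$, i.e., $h_{l}^{\phi}(\hat{\sig{n}_l})=\sig{1}$.

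Combining the two bullets, $\hat{\sig{n}_l}$ is exactly a vector of the kind whose non-existence was hypothesised, which is a contradiction; hence safety must hold. There is no real technical obstacle here — the argument is a short chain of definition-unfoldings — so the only thing I need to be careful about is to cite Assumption~\ref{assumption.perfect.qualitative} explicitly when passing from ``$\sig{in}$ lies in $\sig{In}_{\phi}$'' to ``the characterizer outputs $\sig{1}$ on the layer-$l$ activations'', because without this assumption the reverse implication (from a witness $\hat{\sig{n}_l}$ back to an actual unsafe input in $\sig{In}_{\phi}$) would fail and the reduction would only be sound in one direction. Since the lemma only claims soundness (witness non-existence $\Rightarrow$ safety), this one-directional use of the assumption is exactly what is needed, and no completeness claim has to be justified.
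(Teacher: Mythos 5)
Your proof is correct and is essentially the paper's argument spelled out in full: the paper's one-line proof ("for every input $\sig{in}\in\mathbb{R}^{d_{0}}$, $f^{(l)}(\sig{in})\in\mathbb{R}^{d_l}$") is precisely the observation that any unsafe input would yield the witness $\hat{\sig{n}_l}=f^{(l)}(\sig{in})$ you construct, with the appeal to Assumption~\ref{assumption.perfect.qualitative} left implicit since the paper has declared it in force. Your explicit note that the assumption is only needed in the one (soundness) direction is a fair and accurate elaboration, not a departure.
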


\begin{proof} The lemma holds because for every input~$\sig{in} \in \mathbb{R}^{d_{0}}$ of the network, $f^{(l)}(\sig{in}) \in  \mathbb{R}^{d_l}$. 
\end{proof}


\noindent Obviously, the use of~$\mathbb{R}^{d_l}$ in Lemma~\ref{lemma.abstraction.layer} is overly conservative, and we can strengthen Lemma~\ref{lemma.abstraction.layer} without losing soundness, if we find $\mathcal{S} \subseteq \mathbb{R}^{d_{l}}$ which guarantees that $f^{(l)}(\sig{in}) \in \mathcal{S}$ for every input~$\sig{in} \in \mathbb{R}^{d_{0}}$ of the network. Obtaining such a set~$\mathcal{S}$ can be achieved by abstract interpretation techniques~\cite{gehr2018ai2,yang2019analyzing} which perform symbolic reasoning over the neural network in a layer-wise manner. 

\begin{lemma}[Abstraction via Input Over-approximation]~\label{lemma.abstraction.layer.v2}
Let $\mathcal{S} \subseteq \mathbb{R}^{d_{l}}$ guarantee that $f^{(l)}(\sig{in}) \in \mathcal{S}$ for every input~$\sig{in} \in \mathbb{R}^{d_{0}}$ of the network. 
If there exists no $\hat{\sig{n}_l}\in \mathcal{S}$ such that 
$g^{(L)}( g^{(L-1)}\ldots(g^{(l+1)}(\hat{\sig{n}_l}))$ satisfies~$\psi_{out}$ and $h_{l}^{\phi}(\hat{\sig{n}_l}) =\sig{1}$, then the neural network is safe under input constraint~$\phi$ and output risk constraint~$\psi$.
\end{lemma}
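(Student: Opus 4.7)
The plan is to prove this by contraposition, closely mirroring the proof of Lemma~\ref{lemma.abstraction.layer} but exploiting the tighter container $\mathcal{S}$ in place of the trivial container $\mathbb{R}^{d_l}$. Concretely, I would assume for contradiction that the network is \emph{not} safe under $\phi$ and $\psi$; by the definition of safety verification (and using Assumption~\ref{assumption.perfect.qualitative} to exchange membership in $\sig{In}_\phi$ with the condition $h_l^\phi(f^{(l)}(\sig{in})) = \sig{1}$), this yields some witness $\sig{in}\in\mathbb{R}^{d_0}$ for which $h_l^\phi(f^{(l)}(\sig{in})) = \sig{1}$ and $f^{(L)}(\sig{in})$ satisfies $\psi$.

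The next step is to set $\hat{\sig{n}_l} := f^{(l)}(\sig{in})$ and verify that this value satisfies all three conditions whose joint non-realizability was the hypothesis of the lemma. First, $\hat{\sig{n}_l} \in \mathcal{S}$ holds immediately by the standing assumption that $f^{(l)}(\sig{in}) \in \mathcal{S}$ for every input $\sig{in}\in\mathbb{R}^{d_0}$. Second, by the compositional definition $f^{(L)} = g^{(L)} \circ g^{(L-1)} \circ \cdots \circ g^{(l+1)} \circ f^{(l)}$, one has $g^{(L)}(g^{(L-1)}(\ldots g^{(l+1)}(\hat{\sig{n}_l})\ldots)) = f^{(L)}(\sig{in})$, which satisfies $\psi$ (written $\psi_{out}$ in the statement) by choice of $\sig{in}$. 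Third, $h_l^\phi(\hat{\sig{n}_l}) = h_l^\phi(f^{(l)}(\sig{in})) = \sig{1}$ was part of the assumption derived from unsafety. These three facts together contradict the hypothesis of the lemma, so the network must be safe.

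There is no real obstacle here; the argument is essentially a witness-lifting from the input space $\mathbb{R}^{d_0}$ into the intermediate feature space $\mathcal{S}\subseteq\mathbb{R}^{d_l}$, and the only place where the new hypothesis on $\mathcal{S}$ is used is to guarantee that the lifted witness $\hat{\sig{n}_l} := f^{(l)}(\sig{in})$ actually lies in $\mathcal{S}$. The one bookkeeping point worth stating explicitly in the write-up is that the equivalence between ``$\sig{in}\in\sig{In}_\phi$'' and ``$h_l^\phi(f^{(l)}(\sig{in})) = \sig{1}$'' is exactly Assumption~\ref{assumption.perfect.qualitative}, which the paper has already declared in force; without this, the reduction from the safety definition to the three-condition form used in the lemma would not go through.
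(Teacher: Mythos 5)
Your proof is correct and is essentially the paper's own argument spelled out in full: the paper gives no explicit proof for this lemma, but its one-line proof of Lemma~\ref{lemma.abstraction.layer} (``for every input $\sig{in}$, $f^{(l)}(\sig{in}) \in \mathbb{R}^{d_l}$'') is precisely your witness-lifting step with $\mathbb{R}^{d_l}$ replaced by $\mathcal{S}$. Your contrapositive write-up, including the explicit appeal to Assumption~\ref{assumption.perfect.qualitative}, is just a more careful rendering of the same reasoning.
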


\paragraph{Assume-guarantee Verification via Monitoring.} If the computed~$\mathcal{S}$, due to over-approximation, is too coarse to prove safety, one practical alternative is to generate $\tilde{\mathcal{S}}$ which only guarantees $f^{(l)}(\sig{in}) \in \tilde{\mathcal{S}}$ for every input~$\sig{in} \in \sig{In}$ in the training data. In other words, $\tilde{\mathcal{S}}$ over-approximates the neuron values computed based on the samples in the training data. 

If using $\tilde{\mathcal{S}}$ is sufficient to prove safety and if  for any input $\sig{in}$, checking whether  $f^{(l)}(\sig{in}) \in \tilde{\mathcal{S}}$ can be computed efficiently, one can conditionally accept the proof by designing a run-time monitor which raises a \emph{warning} that the  assumption  $f^{(l)}(\sig{in}) \in \tilde{\mathcal{S}}$ used in the proof is violated.  Admittedly, $\tilde{\mathcal{S}}$ can be an under-approximation over $\{f^{(l)}(\sig{in}) \;|\; \sig{in} \in \mathbb{R}^{d_{0}}\}$, but practically creating an over-approximation only based on the training data is useful and can avoid unstructured input such as noise which is allowed when using~$\mathbb{R}^{d_{0}}$.

\section{Towards Statistical Reasoning}~\label{sec.quantitative}

The results in Section~\ref{sec.verification.workflow} are based on two assumptions of \emph{perfection}, namely

\begin{itemize}
    \item {\bf(perfect training)} the input property characterizer perfectly decides whether property~$\phi$ holds, for each sample in the training data, and
    \item {\bf(perfect generalization)} the input property characterizer generalizes its decision (whether property~$\phi$ holds) also perfectly to every data point in the complete input space.  
\end{itemize}

One important question appears when the above two assumptions do not hold, meaning that it is possible for the input property characterizer to make mistakes. By considering all four possibilities in Table~\ref{table.four.possibilities}, one realizes that even when a safety proof is established by considering all inputs where $h_{l}^{\phi}(f^{(l)}(\sig{in})) = \sig{1}$, there exists a probability~$\gamma$ where an input~$\sig{in}$ should be analyzed, but $\sig{in}$ is omitted in the proof process due to $h_{l}^{\phi}(f^{(l)}(\sig{in}))$ being $\sig{0}$ (i.e., $\sig{in} \in \sig{In}_{\phi}$ and $h_{l}^{\phi}(f^{(l)}(\sig{in})) = \sig{0}$). Therefore, one can only establish a statistical guarantee with\\$(1-\gamma)$ probability over the correctness claim\footnote{Note that for parts where $\sig{in} \not\in \sig{In}_{\phi}$ and $h_{l}^{\phi}(f^{(l)}(\sig{in})) = \sig{0}$, no problem occurs as the safety analysis guarantees the desired property when $h_{l}^{\phi}(f^{(l)}(\sig{in})) = \sig{1}$.}, provided that all data points used in training $h^{\phi}_{l}$ are also safe\footnote{In other words, for every $(\sig{in}, \sig{c}) \in (\sig{In}, \sig{C}_{\phi})$, if $h_{l}^{\phi}(f^{(l)}(\sig{in})) = \sig{0}$ and $\sig{c} = \sig{1}$, then
$f^{(L)}(\sig{in})$ does not satisfy~$\psi$.}.

\begin{table}[t]
\begin{tabular}{C{2.5cm}|C{2cm} | C{2cm}|}
\cline{2-3}
                        & $\sig{in} \in \sig{In}_{\phi}$ & $\sig{in} \not\in \sig{In}_{\phi}$ \\ \hline
\multicolumn{1}{|l|}{$h_{l}^{\phi}(f^{(l)}(\sig{in})) = \sig{1}$} & $\alpha$ & $\beta$ \\ \hline
\multicolumn{1}{|l|}{$h_{l}^{\phi}(f^{(l)}(\sig{in})) = \sig{0}$} & $\gamma$ & $1 - \alpha -\beta -\gamma$ \\ \hline
\end{tabular}
\caption{Probability by considering all possible cases due to decisions made by the input characterizer (whether $h_{l}^{\phi}(f^{(l)}(\sig{in})) = \sig{1}$) and the ground truth (whether $\sig{in} \in \sig{In}_{\phi}$).}\label{table.four.possibilities}
\vspace{-2mm}
\end{table}

\section{Related Work}~\label{sec.related}

Formal verification of neural networks has drawn huge attention with many results available~\cite{pulina2010abstraction,DBLP:conf/cav/KatzBDJK17,cheng2017maximum,ehlers2017formal,lomuscio2017approach,narodytska2018verifying,gehr2018ai2,dutta2018output,bunel2018unified,DBLP:conf/ijcai/RuanHK18,weng2018towards,wang2018formal,yang2019analyzing}. Although specifications used in formal verification of neural networks are discussed in recent reviews~\cite{huang2018safety,seshia2018formal}, the specification problem in terms of characterising an image set is not addressed, so research results largely use inherent properties of a neural network such as local robustness (as output invariance) or output ranges where one does not need to characterize properties over a set of input images. Not being able to properly characterizing input conditions (one possibility is to simply consider every input to be bounded by $[-1, 1]$) makes it difficult for formal static analysis to achieve any useful results on deep perception networks, regardless of the type of abstraction domain being used (box, octagon, or zonotope). Lastly, our work is motivated by \emph{zero shot learning}~\cite{palatucci2009zero} which trains additional features apart from a standard neural network. The feature detector is commonly created by extending the network from close-to-output layers.

\section{Evaluation and Concluding Remarks}~\label{sec.conclusion}

We have applied this methodology to examine a direct perception neural network developed by Audi. The network acts as a hot standby system and computes the next waypoint and orientation for autonomous vehicles to follow. As the close-to-output layers of the network are either ReLU or Batch Normalization, and as $\psi$ is a conjunction of linear constraints over output, it is feasible to use exact verification methods such as ReLUplex~\cite{DBLP:conf/cav/KatzBDJK17}, Planet~\cite{ehlers2017formal} or MILP-based approaches~\cite{cheng2017maximum, lomuscio2017approach} as the underlying verification method. We developed a variation of \sig{nn-dependability-kit}\footnote{\url{https://github.com/dependable-ai/nn-dependability-kit/}} to read models from TensorFlow\footnote{\url{https://www.tensorflow.org/}} and to perform formal verification via a reduction to MILP. Using assume-guarantee based techniques that take an over-approximation from neuron values produced by the training data\footnote{The data is taken from a particular segment of the German A9 highway, by considering variations such as weather and the current lane.}, it is possible to conditionally prove some properties such as ``\emph{impossibility to suggest steering to the far left, when the road image is bending to the right}". However, under the current setup, it is still impossible to prove intriguing properties such as  ``\emph{impossibility to suggest steering straight, when the road image is bending to the right}". We suspect that the main reason is due to the inherent limitation of the neural network under analysis.

In our experiment, we also found that for some input properties such as traffic participants in adjacent lanes, it is very difficult to construct the corresponding input property characterizers by taking neuron values from close-to-output layers (i.e., the trained classifier almost acts like fair coin flipping). Based on the theory of \emph{information bottleneck} for neural networks~\cite{tishby2015deep,shwartz2017opening}, a neural network from high dimensional input to low dimensional output naturally eliminates unrelated information in close-to-output layers. Therefore, the input property can be unrelated to the output of the network. Although we are unable to prove that the output of the network is safe under these input constraints, it should be possible to construct a counter example either by capturing more data or by using adversarial perturbation techniques~\cite{szegedy2013intriguing,moosavi2017universal}.   

To achieve meaningful formal verification, in our experiments, we also realized that it is commonly not sufficient to only record the minimum and maximum value for each neuron, as boxed abstraction can lead to huge over-approximation. In certain circumstances, we also record the minimum and maximum difference between two adjacent neurons in a layer (in Figure~\ref{fig:architecture}, we record $n^{17}_{i+1} - n^{17}_{i}$ where $i \in \{1,2,3,4\}$). Modern training frameworks such as TensorFlow support  computing  differences of adjacent neurons with GPU parallelization\footnote{Computing the neuron difference, when neuron values are stored in an 1D tensor~$n$ can be done in numpy using a single instruction $\texttt{diff}(n)$, and in TensorFlow using $n[1:]-n[:-1]$.}, thereby making monitoring possible.

Overall, our initial result demonstrates the potential of using formal methods even on very complex neural networks, while it provides a clear path to engineers to resolve the problem related to how to characterize input conditions for verification (by also applying machine learning techniques). Our approach of looking at close-to-output layers can be viewed as an abstraction which can, in future work, leads to layer-wise incremental abstraction-refinement techniques. Although our practical motivation is to verify direct perception networks, the presented technique is equally  applicable to any deep network for vision and lidar systems where input constraints are hard to characterize. It opens a new research direction of using learning to assist practical verification of learning systems.

\vspace{2mm}
\noindent \textbf{Acknowledgement} The research work 
is conducted during the first author's service at the fortiss research institute and
is supported by the following projects: ``\emph{Audi Verifiable AI}" from Audi AG, Germany and  ``\emph{Dependable AI for automotive systems}" from DENSO Corporation, Japan. 

\bibliographystyle{abbrv}

\end{document}